\theoremstyle{definition}
\newtheorem{theorem}{Theorem}[section]
\newtheorem{lemma}[theorem]{Lemma}
\newtheorem{proposition}[theorem]{Proposition}
\newtheorem{definition}[theorem]{Definition}
\newtheorem*{remark}{Remark}
\renewcommand{\PR}{\mathbb P}
\begin{document}

\title{A cluster expansion approach to exponential random
graph models}

\author{Mei Yin}
\address{Department of Mathematics, University of Texas, Austin, TX
78712, USA} \ead{myin@math.utexas.edu}

\begin{abstract}
The exponential family of random graphs is among the most
widely-studied network models. We show that any exponential random
graph model may alternatively be viewed as a lattice gas model
with a finite Banach space norm. The system may then be treated by
cluster expansion methods from statistical mechanics. In
particular, we derive a convergent power series expansion for the
limiting free energy in the case of small parameters. Since the
free energy is the generating function for the expectations of
other random variables, this characterizes the structure and
behavior of the limiting network in this parameter region.
\end{abstract}

\maketitle

\section{Introduction}
\label{intro} Random graphs have been widely studied (see [1, 2]
for surveys of recent work) since the pioneering work on the
independent case. The first serious attempt was made by Solomonoff
and Rapoport \cite{SR} in the early 1950s, who proposed the
``random net'' model in their investigation into mathematical
biology. A decade later, Erd\H{o}s and R\'{e}nyi \cite{ER}
independently rediscovered this model and studied it exhaustively,
hence the namesake ``Erd\H{o}s-R\'{e}nyi random graph''. Their
construction was straightforward: Take $n$ identical vertices, and
connect each pair by undirected edges independently with
probability $p$. Many properties of this simple random graph are
exactly solvable in the large $n$ limit. Perhaps the most
important feature is that it possesses a phase transition: From a
low-density, low-$p$ state in which there are few edges to a
high-density, high-$p$ state in which an extensive fraction of all
vertices are joined together in a single giant component.

The Erd\H{o}s-R\'{e}nyi random graph, while illuminating, is a
poor model for most real-world networks, as has been argued by
many authors \cite{WS, SS, Do}, and so it has been extended in a
variety of ways. To address its unrealistic degree distribution,
generalized random graph models such as the configuration model
\cite{M} and the multipartite graph model \cite{NWS} have been
developed. However, these models have a serious shortcoming, in
that they fail to capture the common phenomenon of transitivity
exhibited in social and biological networks of various kinds.

The main hope for progress in this direction seems to lie in
formulating a model that incorporates graph structure in more
detail. A top candidate is the exponential random graph model, in
which dependence between the random edges is defined through
certain finite subgraphs, in imitation of the use of potential
energy to provide dependence between particle states in a grand
canonical ensemble of statistical physics. These exponential
models were first studied by Holland and Leinhardt \cite{HL} in
the directed case, and later developed extensively by Frank and
Strauss \cite{FS}, who related the random graph edges to a Markov
random field. More developments are summarized in \cite{WF, S, R}.

The past few years have witnessed a surge of interest in the study
of the limiting behavior of exponential random graphs. A major
problem in this field is the evaluation of the free energy, a
quantity that is crucial for carrying out maximum likelihood and
Bayesian inference. A particular motivation for people in the
statistical mechanics community to study the free energy is that
it provides information on phase transition in these sophisticated
models. Many people have made substantial contributions in this
area: H\"{a}ggstr\"{o}m and Jonasson \cite{HJ} examined the phase
structure and percolation phenomenon of the random triangle model.
Park and Newman \cite{PN1, PN2} constructed mean-field
approximations and analyzed the phase diagram for the
edge-two-star and edge-triangle models. Borgs et al. \cite{Bg}
established a lower bound on the largest component above the
critical threshold for random subgraphs of the $n$-cube.
Bollob\'{a}s et al. \cite{Bo} showed that for inhomogeneous random
graphs with (conditional) independence between the edges, the
critical point of the phase transition and the size of the giant
component above the transition could be determined under one very
weak assumption. Bhamidi et al. \cite{B} focused on the mixing
time of the Glauber dynamics and proposed that in the high
temperature regime the exponential random graph is not appreciably
different from the Erd\H{o}s-R\'{e}nyi random graph. Dembo and
Montanari \cite{DM} discovered that for the Ising model on a
sparse graph phase transitions and coexistence phenomena are
related to Gibbs measures on infinite trees. Using the emerging
tools of graph limits as developed by Lov\'{a}sz and coworkers
\cite{LS}, Chatterjee and Diaconis \cite{CD} gave the first
rigorous proof of singular behavior in the edge-triangle model.
They also suggested that, quite generally, models with repulsion
exhibit a transition qualitatively like the solid/fluid
transition, in which one phase has nontrivial structure, as
distinguished from the ``disordered'' Erd\H{o}s-R\'{e}nyi graphs.
Radin and Yin \cite{RY} derived the full phase diagram for a large
family of $2$-parameter exponential random graph models with
attraction and showed that they all contain a first order
transition curve ending in a second order critical point
(qualitatively similar to the gas/liquid transition in equilibrium
materials). Aristoff and Radin \cite{AR} considered random graph
models with repulsion and proved that the region of parameter
space corresponding to multipartite structure is separated by a
phase transition from the region of disordered graphs (proof
recently improved by Yin \cite{Yin}).

As is usual in statistical mechanics, we work with a finite
probability space, and interpret our results in some more
sophisticated limiting sense. We consider general $k$-parameter
families of exponential random graphs.
\begin{itemize}
\item A complete graph $K_n$ on $n$ vertices consists of a vertex
set $V_n$ ($|V_n|=n$) and an edge set $E_n$ ($|E_n|={n \choose
2}$). A vertex pair $e$ is a two-element subset of $V_n$, and the
set of all vertex pairs constitute $E_n$.

\item $\mathcal{G}_n$ is the set of simple graphs $G$ on $n$
vertices, where a graph $G$ with vertex set $V(G)=V_n$ is simple
if its edge set $E(G)$ is a subset of $E_n$.

\item $H_1,...,H_k$ are pre-chosen finite simple graphs. Each
$H_i$ has $m_i$ vertices ($2\leq m_i\leq m$) and $p_i$ edges
($1\leq p_i\leq p$). In particular, $H_1$ is $K_2$ (i.e. a single
edge).

\item A vertex map $f: V(H_i) \to V(G)$ is a homomorphism if the
induced edge map $f_*: E(H_i) \to E_n$ sends $E(H_i)$ into $E(G)$.
$t(H_i, G)$ is the density of graph homomorphisms $H_i \to G$:
\begin{equation}
\label{t} t(H_i, G)=\frac{|\text{hom}(H_i, G)|}{|\text{hom}(H_i,
K_n)|},
\end{equation}
where the denominator $|\text{hom}(H_i, K_n)|=n^{m_i}$ counts the
total number of mappings from $V(H_i)$ to $V(G)$.

\item $\beta=(\beta_1,...,\beta_k)$ are $k$ real parameters. They
tune the influence of the pre-chosen graphs $H_1,..., H_k$.
\end{itemize}

Let $T^\beta(G)$ be the weighted sum of graph homomorphism
densities $t(H_i, G)$:
\begin{equation}
T^\beta(G)=\sum_{i=1}^k \beta_i t(H_i, G).
\end{equation}
There is a probability mass function that assigns to every $G$
\begin{equation}
\label{pmf}
\PR_n^{\beta}(G)=\exp\left(n^2(T^{\beta}(G)-\psi_n^{\beta})\right),
\end{equation}
where $\psi^\beta_n$ is the normalization constant, i.e., it
satisfies
\begin{equation}
\label{psi} \sum_{G \in \mathcal{G}_n} \exp\left(n^2
T^\beta(G)\right)=\exp\left(n^2 \psi^\beta_n\right).
\end{equation}

The rest of this paper is organized as follows. In Section
\ref{model} we show that the general exponential random graph
model may alternatively be viewed as a lattice gas model with a
finite Banach space norm (Propositions \ref{stat} and
\ref{count}). This transforms the probability model into a
statistical mechanics model (Theorem \ref{con}). In Section
\ref{expansion} we apply cluster expansion techniques \cite{Y} and
derive a convergent power series expansion (high-temperature
expansion) for the limiting free energy in the case of small
parameters (Theorem \ref{main}). Finally, Section \ref{conclusion}
is devoted to concluding remarks.

\section{Alternative view}
\label{model} In this section we will transform the exponential
random graph model into a lattice gas model with a finite Banach
space norm. We begin by presenting an alternative view of the
homomorphism density $t(H_i, G)$ (\ref{t}).

\begin{definition}
\label{adj} Let $G \in \mathcal{G}_n$. Let $\sigma$ be the
indicator function of $E(G)$. For every vertex pair $e=\{i, j\}$,
\begin{equation*}
\sigma_{e}=\left\{%
\begin{array}{ll}
    1, & \hbox{an edge exists between vertices $i$ and $j$;} \\
    0, & \hbox{otherwise.} \\
\end{array}%
\right.
\end{equation*}
\end{definition}

\begin{definition}
\label{exact} Let $X \subseteq E_n$. Fix a finite simple graph $H$
with $p$ edges. Define the exact graph homomorphism density $d(H,
X)$ by
\begin{equation}
d(H, X)=\frac{|\text{ehom}(H, X)|}{|\text{hom}(H, K_n)|},
\end{equation}
where the numerator $|\text{ehom}(H, X)|$ counts the number of
homomorphisms $f: V(H) \to V_n$ whose induced map $f_*: E(H) \to
E_n$ sends $E(H)$ onto $X$. It is clear that $d(H, X)$ is
finite-body: $d(H, X)=0$ for $|X|>p$.
\end{definition}

\begin{proposition}
\label{stat} Let $G \in \mathcal{G}_n$. Let $\sigma$ be the
indicator function of $E(G)$. Fix a finite simple graph $H$. The
graph homomorphism density $t(H, G)$ has a lattice gas
representation
\begin{equation}
t(H, G)=\sum_{X \subseteq E_n}d(H, X)\sigma_X,
\end{equation}
where $\sigma_X=\prod_{e\in X}\sigma_{e}$.
\end{proposition}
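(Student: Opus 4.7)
The plan is to decompose the set $\operatorname{hom}(H,G)$ of homomorphisms into disjoint pieces according to the exact image of the induced edge map, and then encode the constraint ``the image lies in $E(G)$'' in terms of the indicator variables $\sigma_e$.

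First, I would observe that any homomorphism $f:V(H)\to V(G)$ has a well-defined induced edge image $X:=f_*(E(H))\subseteq E(G)\subseteq E_n$. Classifying homomorphisms by this image gives the disjoint decomposition
\begin{equation*}
\operatorname{hom}(H,G)=\bigsqcup_{X\subseteq E(G)} \operatorname{ehom}(H,X),
\end{equation*}
because by Definition \ref{exact}, $\operatorname{ehom}(H,X)$ is exactly the set of vertex maps with $f_*(E(H))=X$. Taking cardinalities,
\begin{equation*}
|\operatorname{hom}(H,G)|=\sum_{X\subseteq E(G)} |\operatorname{ehom}(H,X)|.
\end{equation*}

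Next, I would replace the restriction $X\subseteq E(G)$ by an indicator weight so that the sum ranges over all subsets $X\subseteq E_n$. By Definition \ref{adj}, $\sigma_e=1$ precisely when $e\in E(G)$, so
\begin{equation*}
\sigma_X=\prod_{e\in X}\sigma_e=\mathbf{1}\{X\subseteq E(G)\}.
\end{equation*}
Therefore
\begin{equation*}
|\operatorname{hom}(H,G)|=\sum_{X\subseteq E_n} |\operatorname{ehom}(H,X)|\,\sigma_X.
\end{equation*}
Dividing both sides by $|\operatorname{hom}(H,K_n)|$ and applying the definitions \eqref{t} of $t(H,G)$ and of $d(H,X)$ yields the claim.

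There is no substantial obstacle; the argument is essentially a change of summation order. The only point that requires care is the distinction between ``sending $E(H)$ \emph{into}'' versus ``\emph{onto}'' $X$: it is crucial that $\operatorname{ehom}(H,X)$ records an exact image, for this is what makes the decomposition of $\operatorname{hom}(H,G)$ disjoint and in turn guarantees that no homomorphism is overcounted when $X$ ranges over all subsets of $E_n$. The finite-body property $d(H,X)=0$ for $|X|>p$ noted in Definition \ref{exact} follows automatically, since a graph with $p$ edges cannot have an edge map whose image contains more than $p$ edges.
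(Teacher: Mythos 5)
Your proof is correct and follows essentially the same route as the paper's: partition the homomorphisms by the exact edge image to get $t(H,G)=\sum_{X\subseteq E(G)}d(H,X)$, then absorb the constraint $X\subseteq E(G)$ into the indicator $\sigma_X$. You simply spell out the disjointness of the decomposition in more detail than the paper does.
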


\begin{proof}
By Definition \ref{exact}, the graph homomorphism density (\ref{t}) is given by
\begin{equation}
\label{alt} t(H, G)=\sum_{X \subseteq E(G)} d(H, X).
\end{equation}
Our claim easily follows once we realize that $\sigma_X=1$ if and
only if $X \subseteq E(G)$.
\end{proof}

\begin{figure}
\centering
\includegraphics[width=5in]{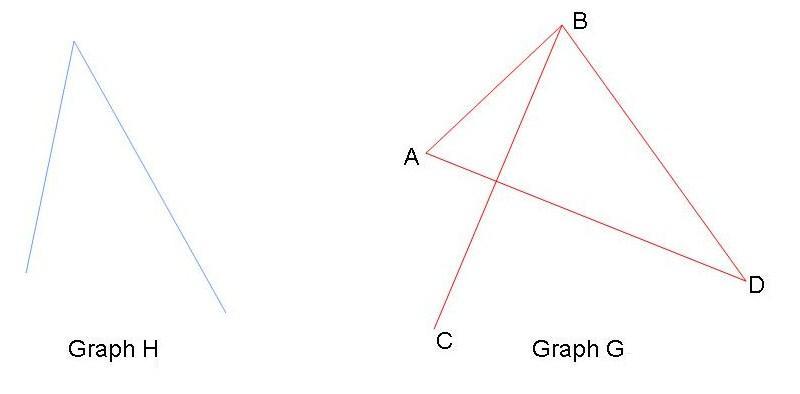}
\caption{Constructing a lattice gas representation for the graph
homomorphism density $t(H, G)$.} \label{weight}
\end{figure}

The above construction might be better explained with a concrete
example. Take $H$ a two-star (with $m=3$ vertices and $p=2$
edges). Let $G$ (see Figure \ref{weight}) be a finite simple graph
on $4$ vertices ($n=4$): $A$, $B$, $C$, and $D$. There are $18$
homomorphisms and $4^3=64$ total mappings from $V(H)$ to $V(G)$.
By (\ref{t}), the homomorphism density of $H$ in $G$ is $t(H,
G)=18/64$.

The density $t(H, G)$ may be derived through a lattice gas
representation as well. For notational convenience, we denote
$d(H, X)$ by $J(X)$. The image of $H$ in $G$ under a homomorphic
mapping is either an edge or a two-star. The exact homomorphism
density for an edge is $2/64$, and we have $J(\{A, B\})=J(\{A,
C\})=J(\{A, D\})=J(\{B, C\})=J(\{B, D\})=J(\{C, D\})=2/64$. The
exact homomorphism density for a two-star is also $2/64$, and we
have $J(\{A, B\}, \{B, C\})=J(\{B, A\}, \{A, C\})=J(\{A, C\}, \{C,
B\})=J(\{A, B\}, \{B, D\})=J(\{B, A\}, \{A, D\})=J(\{A, D\}, \{D,
B\})= J(\{A, C\}, \{C, D\})=J(\{C, A\}, \{A, D\})=J(\{A, D\}, \{D,
C\})= J(\{B, C\}, \{C, D\})=J(\{C, B\}, \{B, D\})=J(\{B, D\}, \{D,
C\})=2/64$. For all other $X \subseteq E_4$, we have $J(X)=0$. The
indicator function $\sigma$ of this particular $E(G)$ is given by:
$\sigma_{\{A, B\}}=\sigma_{\{A, D\}}=\sigma_{\{B,
C\}}=\sigma_{\{B, D\}}=1$ and $\sigma_{\{A, C\}}=\sigma_{\{C,
D\}}=0$. Therefore the valid images of $H$ in $G$ are: edges $AB$,
$AD$, $BC$, $BD$, each carrying density $2/64$; two-stars $(AB,
BC)$, $(AB, BD)$, $(CB, BD)$, $(BA, AD)$, $(AD, DB)$, each
carrying density $2/64$, making the combined density $\sum_{X
\subseteq E_4}J(X)\sigma_X=18/64$.

As $n$ gets large, it would seem hard to keep track of all the
densities $d(H, X)$ in the lattice gas representation, nevertheless, we will show that the
pinned densities have a universal upper bound.

\begin{proposition}
\label{count} Let $G \in \mathcal{G}_n$. Fix a finite simple graph
$H$ with $m$ vertices. Fix a vertex pair $e=\{i,j\}$ of $G$.
Denote by $t_{e}(H, G)$ the sum of the exact homomorphism
densities $d(H, X)$ with $e \in X \subseteq E(G)$. Then in the
lattice gas representation (cf. Proposition \ref{stat}), we have
\begin{equation}
t_{e}(H, G)=\sum_{X: e\in X \subseteq E(G)}d(H, X)\leq
\frac{m(m-1)}{n^2}.
\end{equation}
\end{proposition}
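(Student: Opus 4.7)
The plan is to interpret $t_e(H, G)$ combinatorially using the lattice gas representation, then apply a crude union-bound style counting argument. By Definition \ref{exact} and the identity $t(H, G) = \sum_{X \subseteq E(G)} d(H, X)$ appearing in the proof of Proposition \ref{stat}, the quantity $\sum_{X : e \in X \subseteq E(G)} d(H, X)$ equals $n^{-m}$ times the number of homomorphisms $f : V(H) \to V_n$ whose induced edge image $f_*(E(H))$ is contained in $E(G)$ and contains the distinguished edge $e = \{i, j\}$. To get an upper bound I would drop the constraint $f_*(E(H)) \subseteq E(G)$, since dropping a constraint only enlarges the count.

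Next, I would count homomorphisms $f : V(H) \to V_n$ merely satisfying $e \in f_*(E(H))$. For any such $f$, at least one edge $\{u, v\} \in E(H)$ must satisfy $\{f(u), f(v)\} = \{i, j\}$. To overcount such homomorphisms, I select an edge $\{u, v\} \in E(H)$ (at most $m(m-1)/2$ choices, since $H$ has $m$ vertices), an orientation sending one endpoint to $i$ and the other to $j$ (a factor of $2$), and an arbitrary extension of $f$ to the remaining $m - 2$ vertices of $H$ (a factor of $n^{m-2}$). Multiplying gives at most $m(m-1)\, n^{m-2}$ homomorphisms, and dividing by $|\text{hom}(H, K_n)| = n^m$ yields the desired bound $m(m-1)/n^2$.

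The argument is essentially a routine union bound, so I do not anticipate a real obstacle. The only subtlety worth flagging is that homomorphisms in which several distinct edges of $H$ happen to map to $\{i,j\}$ are counted multiple times in the enumeration, but this overcounting only strengthens the direction of the inequality and so is harmless.
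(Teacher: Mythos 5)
Your proposal is correct and follows essentially the same route as the paper: both bound the relevant homomorphism count by choosing which part of $H$ maps onto $\{i,j\}$ (you via an edge of $H$ plus an orientation, the paper via an ordered pair of vertices of $H$, each giving at most $m(m-1)$ choices), extending arbitrarily to the remaining $m-2$ vertices for a factor of $n^{m-2}$, and dividing by $n^m$. The minor differences (your slightly sharper requirement that an actual edge of $H$ maps onto $e$, and your explicit handling of overcounting) do not change the argument in any essential way.
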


\begin{proof}
The homomorphisms under consideration must satisfy that the image
of $V(H)$ in $V(G)$ contain vertices $i$ and $j$ of $G$. To count
these homomorphisms, we regard such a mapping as consisting of two
steps. Step $1$: We construct vertex maps from $V(H)$ to $V(G)$:
First select two vertices of $H$ to map onto $i$ and $j$, of which
there are $m(m-1)$ choices. Then map the remaining vertices of $H$
onto $G$, of which there are $n^{m-2}$ ways. Step $2$: We check
whether these vertex maps are valid homomorphisms (i.e.
edge-preserving). The number of homomorphisms is thus bounded by
$m(m-1)n^{m-2}$. Our claim easily follows once we recall that the
total number of mappings from $V(H)$ to $V(G)$ is $n^m$.
\end{proof}

Our next theorem formulates the exponential random graph model as
a lattice gas model. The lattice is $E_n$, the set of all vertex
pairs $e$ of $V_n$. For each site $e=\{i, j\}$, we attach a
lattice gas variable $\sigma_{e}$ which takes on the value $1$ or
$0$. This specifies a simple graph $G$ on $V_n$ by Definition
\ref{adj}. The Hamiltonian $H(\sigma)$ is a weighted sum of graph
homomorphism densities $t(H_i, G)$ and varies according to the
structure of the graph $G$ (cf. Proposition \ref{stat}). It has a
finite Banach space norm which depends on the universal upper
bound for the pinned densities $d(H_i, X)$ (cf. Proposition
\ref{count}). This model is somewhat unconventional in the sense
that the underlying lattice is ``infinite-dimensional'' (a vertex pair
$e_1$ is a nearest neighbor of another vertex pair $e_2$ as long
as they have a vertex in common, so the number of
nearest neighbors of a given site grows with $n$, rather than being fixed at $2^d$ in a $d$-dimensional lattice), yet the associated interaction
is finite-body. A key interest in this model is its behavior in
the large $n$ limit. For that purpose, we will pay special
attention to the limiting free energy as it is the generating
function for the expectations of other random variables.

\begin{theorem}
\label{con} The general exponential random graph model (\ref{pmf})
is equivalent to a lattice gas model with a finite Banach space
norm.
\end{theorem}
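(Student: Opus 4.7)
The plan is to assemble the statement directly from Propositions \ref{stat} and \ref{count}. First I would rewrite the exponent in (\ref{pmf}) as a single many-body interaction on the lattice $E_n$. Applying Proposition \ref{stat} to each $H_i$ and then interchanging the finite sums gives
\begin{equation*}
n^2 T^\beta(G) \;=\; \sum_{X \subseteq E_n} \Phi(X)\,\sigma_X, \qquad \Phi(X) := n^2 \sum_{i=1}^k \beta_i\, d(H_i, X),
\end{equation*}
so that $\PR_n^\beta(G) \propto \exp\!\bigl(\sum_{X \subseteq E_n} \Phi(X)\sigma_X\bigr)$ with partition function $\exp(n^2 \psi_n^\beta)$. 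This is a finite-volume Gibbs measure for the lattice gas with sites $E_n$, single-site variables $\sigma_e \in \{0,1\}$, and interaction $\Phi$. The interaction is finite-body because Definition \ref{exact} forces $\Phi(X)=0$ whenever $|X|>\max_i p_i$.

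Next I would equip the space of interactions with the Banach norm standard for cluster expansions, namely the supremum over sites of the total interaction pinned there,
\begin{equation*}
\|\Phi\| \;:=\; \sup_{e \in E_n} \sum_{X \ni e} |\Phi(X)|.
\end{equation*}
Using the triangle inequality together with Proposition \ref{count} applied to $G=K_n$ (so $E(G)=E_n$, ensuring the inner sum captures every $X \subseteq E_n$ containing $e$), the factor $n^2$ is cancelled exactly:
\begin{equation*}
\|\Phi\| \;\leq\; n^2 \sum_{i=1}^k |\beta_i|\, \frac{m_i(m_i-1)}{n^2} \;\leq\; m(m-1)\sum_{i=1}^k |\beta_i|,
\end{equation*}
a bound that is independent of $n$. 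This produces the required finite Banach space norm. Packaging the two ingredients — an explicit Gibbs representation with interaction $\Phi$, plus the uniform bound on $\|\Phi\|$ — yields the claimed equivalence.

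The proof is essentially bookkeeping; the only real design choice is the norm. The naive choice $\sum_X |\Phi(X)|$ would blow up with $n$, since for each $H_i$ there are on the order of $n^{m_i}$ subsets $X$ with nonzero $d(H_i,X)$. The pinned-at-a-site norm is precisely what Proposition \ref{count} was engineered to control: the $n^{-2}$ scaling in that proposition is exactly what cancels the $n^2$ appearing in (\ref{pmf}). This is the key observation that makes the lattice gas reformulation useful and will be the input needed for the cluster expansion carried out in Section \ref{expansion}.
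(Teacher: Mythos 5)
Your proposal is correct and follows essentially the same route as the paper: the interaction $\Phi(X)$ you define is exactly the paper's $K(X)=n^2\sum_i\beta_i d(H_i,X)$, the pinned supremum norm is the paper's $\mathcal{B}$-norm, and the bound $\|\Phi\|\leq m(m-1)\sum_i|\beta_i|$ is obtained from Proposition \ref{count} exactly as in the text. Your remark about applying Proposition \ref{count} with $G=K_n$ so that the sum runs over all $X\subseteq E_n$ containing $e$ is a small but welcome clarification that the paper leaves implicit.
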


\begin{proof}
The Hamiltonian is the negative of the exponent of the probability mass function
(\ref{pmf}) (without normalization):
\begin{equation}
H(\sigma)=-n^2 \sum_{i=1}^k \beta_i t(H_i, G),
\end{equation}
where $\sigma$ is the indicator function of $E(G)$. By Proposition
\ref{stat}, it has a lattice gas representation
\begin{equation}
H(\sigma)=-n^2\sum_{i=1}^k\beta_i\sum_{X \subseteq E_n}
d(H_i, X)\sigma_X.
\end{equation}
Let $K(X)$ be the weighted sum of exact homomorphism densities $d(H_i, X)$:
\begin{equation}
K(X)=n^2\sum_{i=1}^k\beta_i
d(H_i, X).
\end{equation}
The Hamiltonian notation may then be simplified,
\begin{equation}
\label{simple}
H(\sigma)=-\sum_{X \subseteq E_n}K(X)\sigma_X.
\end{equation}
The interactions $K=\{K(X): X \subseteq E_n\}$ vary with the tuning parameters $\beta_i$ and form
a Banach space $\mathcal{B}$.
Define the $\mathcal{B}$-norm of $K$ by
\begin{equation}
||K||=\sup_{e}\sum_{X: e\in X}|K(X)|.
\end{equation}
By Proposition \ref{count},
\begin{equation}
||K||\leq m(m-1)\sum_{i=1}^k
|\beta_i|.
\end{equation}
This Banach space construction
will be useful later on.
\end{proof}

Notice that $K(X)$, being a weighted sum of
$d(H_i, X)$, satisfies the finite-body property: $K(X)=0$ for
$|X|>p$ (its importance to be illustrated in the proof of Lemma
\ref{lemma}).
Moreover, summing over $G \in \mathcal{G}_n$
(\ref{psi}) is equivalent to summing over indicator functions
$\sigma$ of $E(G) \subseteq E_n$:
\begin{equation}
\label{parr} \sum_{G \in \mathcal{G}_n}\exp\left(n^2 \sum_{i=1}^k
\beta_i t(H_i, G)\right)=\sum_{\sigma} \exp
\left(-H(\sigma)\right).
\end{equation}
Define $\psi^{\beta}$ to be the limiting free energy of the random
graph model, i.e., $\psi^{\beta}=\lim_{n\rightarrow\infty}
\psi^{\beta}_n$. By (\ref{psi}), (\ref{simple}), and (\ref{parr}),
we have
\begin{equation}
\psi^{\beta}=\lim_{n\rightarrow\infty}
\frac{1}{n^2}\log \left(\sum_{\sigma}\exp\left(\sum_{X \subseteq E_n}
K(X)\sigma_X\right)\right).
\end{equation}
We normalize the sum over $\sigma$ for the ease of cluster
expansion. Henceforth $\sum^{\text{norm}}_{\sigma}$ will denote
the normalized sum and satisfy $\sum^{\text{norm}}_{\sigma} 1=1$, i.e.,
$\sum^{\text{norm}}_{\sigma}=2^{-|E_n|}\sum_{\sigma}=2^{- {n \choose 2}}\sum_{\sigma}$.
Define the partition function $W$ by
\begin{equation}
\label{partition} W=\sum^{\text{norm}}_{\sigma} \exp
\left(-H(\sigma)\right)=\sum^{\text{norm}}_{\sigma}
\exp\left(\sum_{X \subseteq E_n} K(X)\sigma_X\right).
\end{equation}
According to standard statistical mechanics,
the limiting free energy $\phi^{\beta}$ of the lattice gas model is then given by
\begin{equation}
\label{phi} \phi^{\beta}=\lim_{n\rightarrow\infty}
\frac{1}{|E_n|}\log W.
\end{equation}
We explore the relationship between the two limiting free energies
$\psi^{\beta}$ and $\phi^{\beta}$:
\begin{eqnarray}
\psi^{\beta}&=&\lim_{n\rightarrow\infty} \frac{1}{n^2} \log \left(2^{n \choose 2} W\right)=
\frac{1}{2}\left(\log 2+ \lim_{n\rightarrow\infty} \frac{1}{{n \choose 2}} \log W\right)\\
&=&\frac{1}{2}\left(\log 2+ \lim_{n\rightarrow\infty} \frac{1}{|E_n|} \log W\right)=\frac{1}{2}(\log 2+\phi^{\beta}).
\end{eqnarray}
The two interpretations of the limiting free energy are thus not
appreciably different, and we may interpret it in either way to
help with the understanding of the structure and behavior of the
limiting network.

\section{Cluster expansion}
\label{expansion} In this section we will apply cluster expansion
techniques to derive a convergent power series expansion
(high-temperature expansion) for the limiting free energy in the
case of small parameters. The cluster expansion expressions
presented here are completely rigorous for finite models, and may
be interpreted in some more sophisticated limiting sense. We begin
by introducing some combinatorial concepts. A hypergraph is a set
of sites together with a collection $\Gamma$ of nonempty subsets.
Such a nonempty set is referred to as a hyper-edge or link. Two
links are connected if they overlap. The support of a hypergraph
is the set $\cup \Gamma$ of sites that belong to some set in
$\Gamma$. A hypergraph $\Gamma$ is connected if the support of
$\Gamma$ is nonempty and cannot be partitioned into nonempty sets
with no connected links. We use $\Gamma_c$ to indicate
connectivity of the hypergraph $\Gamma_c$. Our first proposition
gives a cluster representation for the partition function of
the exponential random graph model.

\begin{proposition}
\label{par} Let $W$ be the partition function of the
exponential random graph model on $n$ vertices (\ref{partition}).
Then $W$ has a formal cluster representation
\begin{equation}
\label{tt}
W=\sum_{\Delta}\prod_{N\in \Delta}w_N,
\end{equation}
where:
\begin{itemize}
\item $\Delta$ is a set of disjoint subsets $N$ of $E_n$.

\item $w_N=\sum_{\cup\Gamma_c=N}\sum^{\text{norm}}_{\sigma|N}\prod_{X\in
\Gamma_c}(\rme^{K(X)\sigma_X}-1).$
\end{itemize}
\end{proposition}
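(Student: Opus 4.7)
The plan is to apply the standard Mayer cluster expansion to the partition function (\ref{partition}). Since the exponents $K(X)\sigma_X$ are commuting numbers, I would first rewrite
\begin{equation*}
\exp\left(\sum_{X \subseteq E_n} K(X)\sigma_X\right)
=\prod_{X \subseteq E_n}\rme^{K(X)\sigma_X}
=\prod_{X \subseteq E_n}\bigl(1+(\rme^{K(X)\sigma_X}-1)\bigr),
\end{equation*}
and then expand the product over subsets of $E_n$, obtaining
\begin{equation*}
W=\sum^{\text{norm}}_{\sigma}\sum_{\Gamma}\prod_{X\in\Gamma}(\rme^{K(X)\sigma_X}-1),
\end{equation*}
where $\Gamma$ ranges over all collections of nonempty subsets $X\subseteq E_n$, with the empty collection contributing the term $1$. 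All sums are finite, so this is a purely formal rearrangement and no convergence issue arises.

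Next I would interchange the two finite sums and analyze each fixed hypergraph $\Gamma$. Define two hyper-edges $X,X'\in\Gamma$ to be connected when $X\cap X'\neq\emptyset$, as in the paper. The transitive closure of this relation partitions $\Gamma$ uniquely into maximal connected sub-hypergraphs $\Gamma_{c,1},\ldots,\Gamma_{c,\ell}$ with supports $N_1,\ldots,N_\ell$; these supports form a pairwise disjoint family of nonempty subsets of $E_n$. Because each factor $\rme^{K(X)\sigma_X}-1$ depends only on the restriction of $\sigma$ to $X$, the product over $\Gamma$ factorizes as a product over the components $\Gamma_{c,j}$, and because the supports $N_j$ are pairwise disjoint the normalized sum over $\sigma$ splits as a product of independent normalized sums over $\sigma|N_j$, with the trivial contribution $\sum^{\text{norm}}_{\sigma_e}1=1$ for every site $e$ lying outside $\bigcup_j N_j$.

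Finally I would re-index the sum: summing over $\Gamma$ is equivalent to first choosing the unordered collection $\Delta=\{N_1,\ldots,N_\ell\}$ of disjoint nonempty supports, and then, independently for each $N\in\Delta$, choosing a connected hypergraph $\Gamma_c$ with $\cup\Gamma_c=N$. Substituting this into the previous display yields exactly $W=\sum_{\Delta}\prod_{N\in\Delta}w_N$ with $w_N$ as defined in the statement (the empty $\Delta$ matching the empty $\Gamma$, both giving an empty product equal to $1$). The only delicate point, which I expect to be the main bookkeeping obstacle, is verifying that the map $\Gamma\mapsto(\Delta,\{\Gamma_{c,j}\}_j)$ is a bijection onto its image; this follows directly from the uniqueness of the decomposition into maximal connected components, and combined with the normalization identity $\sum^{\text{norm}}_\sigma 1=1$ completes the proof.
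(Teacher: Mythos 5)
Your proposal is correct and follows essentially the same route as the paper's own proof: expand $\prod_X\bigl(1+(\rme^{K(X)\sigma_X}-1)\bigr)$ into a sum over hypergraphs $\Gamma$, factor the normalized $\sigma$-sum over the disjoint supports of the connected components, and re-index the sum over $\Gamma$ as a sum over the collection $\Delta$ of supports together with a choice of connected hypergraph on each support. Your explicit verification that $\Gamma\mapsto(\Delta,\{\Gamma_{c,j}\})$ is a bijection is the same bookkeeping the paper handles via its function $S$, just stated more carefully.
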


\begin{proof}
We rewrite $\exp\left(\sum_{X \subseteq E_n} K(X)\sigma_X\right)$ as a perturbation around
zero interaction,
\begin{eqnarray}
\label{W} W=\sum^{\text{norm}}_{\sigma}\prod_{X \subseteq E_n}
\left(1+\rme^{K(X)\sigma_X}-1\right)=\sum^{\text{norm}}_{\sigma}\sum_{\Gamma}\prod_{X\in
\Gamma} \left(\rme^{K(X)\sigma_{X}}-1\right),
\end{eqnarray}
where $\Gamma$ is a set of subsets $X$ of $E_n$.

We are going to organize the sum over hypergraphs in (\ref{W}) in
the following way. Let $N$ be a possible support for a connected
hypergraph. Let $\Delta$ be a disjoint set of such sets $N$. Let
$S$ be a function that takes $N\in \Delta$ to a hypergraph with
support $N$, i.e., $\cup S(N)=N$. Then summing over hypergraphs
$\Gamma$ is equivalent to summing over $\Delta$ and functions $S$
with the appropriate property. Furthermore, the product over $N$
in $\Delta$ and the links in $S(N)$ is equivalent to the product
over the corresponding $\Gamma$. We have
\begin{equation}
W=\sum^{\text{norm}}_{\sigma}\sum_{\Delta}\sum_{S}\prod_{N\in \Delta}\prod_{X\in
S(N)}\left(\rme^{K(X)\sigma_X}-1\right).
\end{equation}
By independence, the sum over $\sigma$ can be factored over
$\Delta$.
Denote by $\sum^{\text{norm}}_{\sigma|N}$ the normalized sum restricted to subset $N$,
i.e., $\sum^{\text{norm}}_{\sigma|N}=2^{-|N|}\sum_{\sigma|N}$.
Similarly, denote by $\sum^{\text{norm}}_{\sigma|e}$ the normalized sum restriced to vertex pair $e$.
This gives
\begin{equation}
\label{comp}
W=\sum_{\Delta}\left(\prod_{e \notin \cup \Delta} \sum^{\text{norm}}_{\sigma|e}1\right)
\sum_S \prod_{N \in \Delta}\sum^{\text{norm}}_{\sigma|N}\prod_{X\in
S(N)} \left(\rme^{K(X)\sigma_{X}}-1\right).
\end{equation}
Because of the normalization, $\sum^{\text{norm}}_{\sigma|e}1=1$, (\ref{comp}) can be simplified,
\begin{equation}
W=\sum_{\Delta}\sum_S \prod_{N \in \Delta}\sum^{\text{norm}}_{\sigma|N}\prod_{X\in
S(N)} \left(\rme^{K(X)\sigma_{X}}-1\right).
\end{equation}
Rearranging the terms by the distributive law, we have
\begin{eqnarray}
W=\sum_{\Delta}\prod_{N\in
\Delta}\sum_{\cup\Gamma_c=N}\sum^{\text{norm}}_{\sigma|N}\prod_{X\in \Gamma_c}
\left(\rme^{K(X)\sigma_{X}}-1\right).
\end{eqnarray}
Our claim follows once we recall the definition of $w_N$.
\end{proof}

Notice that (\ref{tt}) has a graphical representation:
\begin{eqnarray}
W&=&\sum_{n=0}^{\infty}\frac{1}{n!}\sum_{N_1,...,N_n}\prod_{\{i,j\}}\left(1-t(N_i,N_j)\right)
w_{N_1}\cdots w_{N_n}\\
\label{exp}
&=&\sum_{n=0}^{\infty}\frac{1}{n!}\sum_{N_1,...,N_n}c\left(N_1,...,N_n\right)w_{N_1}\cdots
w_{N_n},
\end{eqnarray}
where
\begin{equation}
c\left(N_1,...,N_n\right)=\sum_{R}\prod_{\{i,j\}\in
R}\left(-t(N_i,N_j)\right),
\end{equation}
$R$ is a graph with vertex set $\{1,...,n\}$, and
\begin{eqnarray}
\label{c} t(N_i,N_j)=\left\{\begin{array}{ll}
1 & \mbox{if $N_i$ and $N_j$ overlap};\\
0 & \mbox{otherwise}.\end{array} \right.
\end{eqnarray}

This alternate expression of the partition function $W$
facilitates the application of cluster expansion ideas, which
roughly summarized, state that a sum over arbitrary graphs can be
written as the exponential of a sum over connected graphs. Taking
the logarithm of the partition function (\ref{exp}) thus
replaces the sum over graphs by the sum over connected graphs. The
log operation is physically significant in that the resulting
connected function $\log W$ is proportional to the limiting free
energy $\phi^{\beta}$ (\ref{phi}). A detailed explanation of this
phenomenon may be found, for instance, in a survey article by
Faris \cite{Faris}.

\begin{proposition}
\label{connected} Let $W$ be the partition function of the
exponential random graph model on $n$ vertices (\ref{exp}). Then
the connected function $\log W$ is given by the cluster expansion
\begin{equation}
\label{log} \log
W=\sum_{n=1}^{\infty}\frac{1}{n!}\sum_{N_1,...,N_n}C\left(N_1,...,N_n\right)w_{N_1}\cdots
w_{N_n},
\end{equation}
where
\begin{equation}
\label{C} C\left(N_1,...,N_n\right)=\sum_{R_c}\prod_{\{i,j\}\in
R_c}\left(-t(N_i,N_j)\right),
\end{equation}
and $R_c$ is a connected graph with vertex set $\{1,...,n\}$.
\end{proposition}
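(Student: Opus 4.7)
The plan is to invoke the classical exponential formula of combinatorial species: the exponential generating function for an arbitrary (possibly disconnected) combinatorial class equals the exponential of the generating function for its connected subclass. Applied here, every graph $R$ on $\{1,\ldots,n\}$ decomposes uniquely into connected components, which induces a set partition $\pi$ of $\{1,\ldots,n\}$ together with a connected graph on each block. Since the weight $-t(N_i,N_j)$ attaches independently to each edge, the product $\prod_{\{i,j\}\in R}(-t(N_i,N_j))$ factorizes across blocks of $\pi$. Summing over all graphs $R$ whose connected-component partition equals $\pi$ therefore yields
\begin{equation*}
c(N_1,\ldots,N_n)=\sum_{\pi\vdash\{1,\ldots,n\}}\prod_{B\in\pi}C\bigl((N_i)_{i\in B}\bigr),
\end{equation*}
with $C$ as defined in (\ref{C}).

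Substituting this identity into (\ref{exp}), interchanging the sums over $n$, $\pi$, and $(N_1,\ldots,N_n)$, and regrouping the indices $N_i$ according to the blocks of $\pi$, I obtain
\begin{equation*}
W=\sum_{n\geq 0}\frac{1}{n!}\sum_{\pi\vdash\{1,\ldots,n\}}\prod_{B\in\pi}A_{|B|},\qquad A_k:=\sum_{N_1,\ldots,N_k}C(N_1,\ldots,N_k)\,w_{N_1}\cdots w_{N_k}.
\end{equation*}
The right-hand side is precisely the set-partition expansion of $\exp\bigl(\sum_{k\geq 1}A_k/k!\bigr)$ (the standard Bell-polynomial identity underlying the exponential formula). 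Taking logarithms produces (\ref{log}) with the connectedness constraint now encoded in $C$.

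The main obstacle is interpretive rather than algebraic. At finite $n$ every sum over $N\subseteq E_n$ is finite, $W$ is a strictly positive partition function, and each identity above is a genuine real-number equality, so no convergence issue arises. The substantive question — whether this formal cluster sum converges uniformly in $n$ and yields a tractable power series for the limiting free energy $\phi^{\beta}$ — is precisely the content of the forthcoming Theorem \ref{main}, where the small-parameter hypothesis together with the Banach-space norm bound of Theorem \ref{con} will be used to control the combinatorial explosion of connected hypergraphs. That analysis, however, lies outside the scope of the present proposition, which asserts only the formal cluster representation.
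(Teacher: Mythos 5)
Your argument is the standard exponential formula (connected components of $R$ induce a set partition, the edge weights factorize over blocks, and the Bell-polynomial identity converts the partition sum into an exponential), and this is exactly the route the paper intends --- except that the paper does not write it out at all: Proposition \ref{connected} is stated without proof, with the exponential formula delegated to the survey of Faris \cite{Faris}. Your derivation correctly fills in what the paper leaves implicit, including the tacit point that isolated vertices of $R$ contribute singleton blocks on which $C$ equals $1$.

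One sentence of your proposal is wrong, though, and it is worth being precise about: the claim that at finite $n$ ``each identity above is a genuine real-number equality, so no convergence issue arises.'' The sum over the number of polymers in (\ref{exp}) does terminate, because the hard-core factor $\prod(1-t(N_i,N_j))$ kills every term in which two of the $N_i$ overlap, and only finitely many pairwise disjoint nonempty subsets of $E_n$ exist. But once you expand that product over graphs $R$ and regroup by connected components, the polymers $N_1,\ldots,N_k$ appearing inside $C(N_1,\ldots,N_k)$ are allowed to overlap and repeat, so the series $\sum_{k\geq 1}A_k/k!$ on the right of (\ref{log}) is genuinely infinite even for fixed finite $n$, and neither its convergence nor the interchanges of summation you perform are automatic. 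The identity therefore holds a priori only as an identity of formal power series in the activities $\{w_N\}$ --- consistent with the paper's own wording ``formal cluster representation'' in Proposition \ref{par} --- and becomes a numerical equality only under the Koteck\'{y}--Preiss smallness condition, which is precisely what Proposition \ref{basic} and Theorem \ref{main} supply. Since your closing paragraph already defers the convergence question to that analysis, this is an imprecision of wording rather than a structural flaw, but the sentence as written should be corrected.
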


Now that we have derived an explicit expression for the connected
function $\log W$, we explore criteria that guarantee the
convergence of this expansion. This provides information on the
limiting free energy $\phi^{\beta}$ (\ref{phi}) and characterizes
the structure and behavior of the limiting network. The celebrated
theorem of Koteck\'{y} and Preiss says that if the interaction is
sufficiently weak, then the cluster expansion for the pinned
connected function converges.

\begin{theorem}[Koteck\'{y}-Preiss \cite{Kotecky}]
\label{KP} Consider arbitrary family of activities $v_N \geq 0$.
Suppose there are finite $B_N \geq 0$ such that for all $N_0$,
\begin{equation}
\label{ineq2} \sum_{N}t(N, N_0)v_{N} B_{N}\leq \log B_{N_0}.
\end{equation}
Then the pinned connected function has a convergent power series
expansion in the polydisc $|w_N|\leq v_N$:
\begin{equation}
\label{K} \sum_{n=1}^{\infty}\frac{1}{n!}\sum_{N_1,...,N_n:
\exists i
N_i=N}\left|C\left(N_1,...,N_n\right)\right||w_{N_1}|\cdots
|w_{N_n}|\leq v_{N}B_N.
\end{equation}
\end{theorem}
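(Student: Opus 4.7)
The plan is to prove (\ref{K}) by induction on the truncation level of a pinned cluster series, exploiting a self-consistent functional inequality that falls out of the combinatorial definition (\ref{C}) of $C$.

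First I would introduce the pinned series
\[
\Pi(N) = \sum_{n=1}^{\infty}\frac{1}{(n-1)!}\sum_{N_2,\ldots,N_n}|C(N,N_2,\ldots,N_n)|\,v_{N_2}\cdots v_{N_n},
\]
with the $n=1$ term interpreted as $1$. Because $|C|$ and $\prod|w_{N_i}|$ are symmetric in their arguments, the ``$\exists i:N_i=N$'' sum in (\ref{K}) is dominated (after relabeling $N$ to position $1$ and using $|w_N|\leq v_N$) by $v_N\Pi(N)$; the $1/n!$ in (\ref{K}) pairs against the multiplicity $n$ of positions available for the pinned label, producing the $1/(n-1)!$ weighting. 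Hence it suffices to show $\Pi(N)\leq B_N$ for every $N$.

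Second, I would derive a functional inequality for $\Pi(N)$ via a Penrose/Brydges--Kennedy tree-graph bound
\[
|C(N_1,\ldots,N_n)|\leq \sum_{T}\prod_{\{i,j\}\in T}t(N_i,N_j),
\]
with $T$ ranging over spanning trees on $\{1,\ldots,n\}$. Rooting each tree at the pinned vertex $1$, decomposing it into the $k$ subtrees hanging from the root, and iterating on subtree sizes yields
\[
\Pi(N)\leq \sum_{k=0}^{\infty}\frac{1}{k!}\Bigl(\sum_{N'}t(N',N)\,v_{N'}\,\Pi(N')\Bigr)^{k} = \exp\Bigl(\sum_{N'}t(N',N)\,v_{N'}\,\Pi(N')\Bigr).
\]
Here the $1/k!$ absorbs the orderings of the $k$ children of the root, and the multinomial coefficient partitioning $\{2,\ldots,n\}$ among subtrees of sizes $n_1,\ldots,n_k$ recombines with the internal $1/(n_j-1)!$ weights and the outer $1/(n-1)!$ into the claimed factorization.

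Third, close the induction. Let $\Pi^{(n)}(N)$ denote the truncation of $\Pi(N)$ at level $n$; the same pruning argument gives $\Pi^{(n+1)}(N)\leq \exp\bigl(\sum_{N'}t(N',N)\,v_{N'}\,\Pi^{(n)}(N')\bigr)$. The base case $\Pi^{(1)}(N)=1\leq B_N$ holds since (\ref{ineq2}) forces $\log B_{N_0}\geq 0$. If $\Pi^{(n)}(N')\leq B_{N'}$ for all $N'$, monotonicity of $\exp$ together with (\ref{ineq2}) gives
\[
\Pi^{(n+1)}(N)\leq \exp\Bigl(\sum_{N'}t(N',N)\,v_{N'}\,B_{N'}\Bigr)\leq \exp(\log B_N)=B_N.
\]
Monotone convergence yields $\Pi(N)\leq B_N$, and the first step translates this into (\ref{K}).

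The hardest part will be the clean derivation of the pruning recursion with the correct combinatorial factors: lining up the $1/(n-1)!$ weight on $\Pi(N)$, the multinomial coefficient splitting $\{2,\ldots,n\}$ among the $k$ subtrees, and the internal $1/(n_j-1)!$ factors so that the bound exponentiates exactly. This is the step at which the precise logarithmic form of the KP condition (\ref{ineq2}) is forced: the right-hand side $\log B_{N_0}$ is exactly what linearizes the recursive exponential under the induction ansatz $\Pi(N)\leq B_N$, leaving no slack in the estimate.
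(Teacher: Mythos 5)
The paper offers no proof of Theorem~\ref{KP} at all: it is imported verbatim from Koteck\'y--Preiss \cite{Kotecky} as a black box, so there is no in-paper argument to measure you against. Your sketch is a correct and by-now standard proof of the criterion, but it is a genuinely different route from the one in the cited source: Koteck\'y and Preiss argue by induction on a finite polymer set, controlling ratios of partition functions algebraically, whereas you go through the Penrose/tree-graph bound $|C(N_1,\dots,N_n)|\leq\sum_T\prod_{\{i,j\}\in T}t(N_i,N_j)$, prune rooted trees at the pinned vertex to get the self-consistent inequality $\Pi(N)\leq\exp\bigl(\sum_{N'}t(N',N)v_{N'}\Pi(N')\bigr)$, and close by induction on the truncation level (this is essentially the Ueltschi/Faris-style presentation, fitting given that \cite{Faris} is the paper's own reference for cluster expansions). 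The bookkeeping is right: the union bound over which index carries the pinned label converts $1/n!$ into $1/(n-1)!$ and costs the factor $v_N$; the multinomial splitting of $\{2,\dots,n\}$ among the $k$ subtrees together with the choice of child in each block regenerates the $1/(n_j-1)!$ weights and the $1/k!$ that exponentiates; and the base case $B_N\geq 1$ does follow from nonnegativity of the left side of (\ref{ineq2}). The one substantive input you do not prove is the tree-graph bound itself, which is a nontrivial theorem (valid here because $-t(N_i,N_j)=e^{-V_{ij}}-1$ with $V_{ij}\geq 0$ hard-core); as long as that is cited rather than claimed as obvious, the argument is complete. What your route buys over the original is transparency about where the condition (\ref{ineq2}) enters --- it is exactly the statement that $a\mapsto\sum t\,v\,e^{a}$ maps the ansatz $a_N=\log B_N$ below itself --- at the cost of invoking Penrose; the original KP induction is self-contained but more opaque.
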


\begin{remark}
In application it is convenient to take $B_N=M^{|N|}$ with $M>1$.
With this choice of $B_N$ the Koteck\'{y}-Preiss condition is
equivalent to the condition that (\ref{ineq2}) holds for all one
point sets $N_0=\{e\}$:
\begin{equation}
\label{ineq} \sum_{N: e\in N}v_{N} M^{|N|}\leq \log M.
\end{equation}
This reduced version of the Koteck\'{y}-Preiss condition will be
used throughout the rest of this section.
\end{remark}

At first sight, the Koteck\'{y}-Preiss condition (\ref{ineq})
seems very abstract and difficult to verify. It is a weighted sum
of activities $v_N$ pinned at the vertex pair $e$, and each $v_N$
is an upper bound for $|w_N|$, whose expression involves a
hypergraph decomposition and is rather complicated by itself (cf.
Proposition \ref{par}). The following proposition gives a handy
criterion for weak interaction in the small parameter region
($\sum_{i=1}^k|\beta_i|$ small).

\begin{proposition}
\label{basic} Fix $M>1$. Take $w_N$ as in Proposition
\ref{par}. It is clear that
\begin{equation}
|w_N|\leq v_N=\sum_{\cup\Gamma_c=N}\prod_{X\in
\Gamma_c}(\rme^{|K(X)|}-1).
\end{equation}
Consider the interaction $K$ with the Banach space norm
$||K||$ (cf. Theorem \ref{con}). Suppose $\sum_{i=1}^k|\beta_i|$
is small:
\begin{equation}
\label{eps}||K|| \leq m(m-1)\sum_{i=1}^k |\beta_i| \leq \frac{\log
M (p-1)^p}{2(Mp)^p \left(1+(p-1)\log M\right)}.
\end{equation}
Then (\ref{ineq}) holds for every vertex pair $e$.
\end{proposition}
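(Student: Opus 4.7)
The plan is to unpack $v_N$ via its definition, interchange the sums, and then bound the resulting sum over connected hypergraphs by a rooted-tree iterative argument. Concretely, I would first rewrite
\begin{equation*}
\sum_{N:\,e\in N}v_N M^{|N|}\;=\;\sum_{\Gamma_c:\,e\in\cup\Gamma_c}M^{|\cup\Gamma_c|}\prod_{X\in\Gamma_c}\bigl(\rme^{|K(X)|}-1\bigr),
\end{equation*}
where the sum runs over connected hypergraphs $\Gamma_c$ whose support contains $e$. The finite-body property $K(X)=0$ for $|X|>p$ ensures every link has size at most $p$, so an elementary induction on $|\Gamma_c|$ yields $|\cup\Gamma_c|\le 1+(p-1)|\Gamma_c|$ and hence $M^{|\cup\Gamma_c|}\le M\prod_{X\in\Gamma_c}M^{p-1}$. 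I would also linearise each factor using $\rme^x-1\le x\rme^x$ together with the uniform bound $|K(X)|\le ||K||$ (obtained by picking any $e'\in X$ and invoking the definition of the norm), which by convexity of $x\mapsto \rme^x-1$ gives $\sum_{X\ni e'}(\rme^{|K(X)|}-1)\le \rme^{||K||}-1$ uniformly in $e'$.

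The crux is to estimate $F(e):=\sum_{\Gamma_c:\,e\in\cup\Gamma_c}\prod_{X\in\Gamma_c}M^{p-1}(\rme^{|K(X)|}-1)$ iteratively. For each connected $\Gamma_c$ containing $e$ I would pick a distinguished link $X_0\ni e$ and decompose $\Gamma_c\setminus\{X_0\}$ into connected components, each of which must attach to $X_0$ through at least one of its $\le p-1$ sites other than $e$. Summing with an overcounting-tolerant upper bound (letting different components share attachment sites) produces a self-consistent inequality of the form $F(e)\le A(1+\rho)^{p-1}$ with $A:=M^{p-1}(\rme^{||K||}-1)$ and $\rho:=\sup_{e'}F(e')$. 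Induction on $|\Gamma_c|$ then forces $\rho\le\rho^*$, the smallest positive root of $\rho=A(1+\rho)^{p-1}$.

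Finally, I would verify that hypothesis (\ref{eps}) is calibrated precisely so that $M\rho^*\le\log M$, i.e.\ that (\ref{ineq}) holds. Substituting the candidate $\rho^*=(\log M)/M$ into the fixed-point equation, solving for $A$, and then using $\rme^{||K||}-1\le 2||K||$ in the small-parameter regime recovers the stated threshold on $||K||$; the factors $(p-1)^p/p^p$ and $1+(p-1)\log M$ appear naturally from bounding $(1+(\log M)/M)^{p-1}$ and inverting the fixed-point relation. The main obstacle is the rooted-tree estimate for $F(e)$: the overlap between links makes a naive enumeration by ordered sequences of links diverge factorially, so one must exploit the fact that removing a root link cleanly separates the remainder into sub-components that can be controlled inductively. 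Once the self-consistent inequality is in place, the final matching of numerical constants to the stated threshold is delicate but essentially a routine arithmetic check.
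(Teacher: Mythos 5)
Your overall strategy is the same as the paper's: extract a distinguished root link containing $e$, break the remainder of the connected hypergraph into connected components attached to that link, derive a self-consistent/recursive bound, and control it through the fixed point of a generating-function identity (the paper does this via Lemmas \ref{lemma}--\ref{lemma3}, with the identity $w=2||K||M^pz(1+w)^p$). Your minor variations are fine or even slightly sharper: the superadditivity bound $\sum_{X\ni e'}(\rme^{|K(X)|}-1)\leq \rme^{||K||}-1$ is a legitimate alternative to the paper's $\rme^{|K(X)|}-1\leq 2|K(X)|$, and the support estimate $|\cup\Gamma_c|\leq 1+(p-1)|\Gamma_c|$ improves on the paper's $|N|\leq\sum_{X\in\Gamma_c}|X|\leq p|\Gamma_c|$; either way the stated hypothesis (\ref{eps}) still implies the conclusion.

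There is, however, one genuine gap: your claim that each connected component of $\Gamma_c\setminus\{X_0\}$ must attach to $X_0$ through one of its sites \emph{other than} $e$, which is what produces the exponent $p-1$ in your fixed-point equation $\rho=A(1+\rho)^{p-1}$. This is false. A component may attach to $X_0$ only through $e$ itself: take $X_0=\{e,f\}$ and a second link $X_1=\{e,g\}$; then $\{X_1\}$ is a component of the remainder whose only common site with $X_0$ is $e$. The correct count of possible attachment sites is $p$ (all sites of $X_0$), which is exactly why the paper's Lemma \ref{lemma} carries the factor ${p\choose k}$ with $k$ ranging up to $p$ and why its generating-function identity has $(1+w)^p$ rather than $(1+w)^{p-1}$. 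With your exponent $p-1$ the critical point of $z_1=w/(1+w)^{p-1}$ sits at $w=1/(p-2)$ and $z_1=(p-2)^{p-2}/(p-1)^{p-1}$, so the ``routine arithmetic check'' you defer would \emph{not} reproduce the constants $(p-1)^p/(Mp)^p$ and $1+(p-1)\log M$ in (\ref{eps}); those come from the $(1+w)^p$ fixed point, whose critical values are $w=1/(p-1)$ and $z_1=(p-1)^{p-1}/p^p$. The error is repairable---replace $p-1$ by $p$ throughout the attachment count---after which your argument aligns with the paper's and the threshold matches. (Also note, as the paper does, that distinct components actually attach at \emph{distinct} sites, since two components sharing a site would be connected to each other; your overcounting-tolerant $(1+\rho)^p$ is a valid upper bound for this, just not with $p$ replaced by $p-1$.)
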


\begin{remark}
The maximal region of parameters $\{\beta_i\}$ is obtained by
setting
\begin{equation}
\log M=\frac{-p+\sqrt{5p^2-4p}}{2p(p-1)}.
\end{equation}
\end{remark}

Our ultimate goal is to examine convergence of the limiting free
energy $\phi^{\beta}$ (\ref{phi}), which is proportional to the
connected function $\log W$ (\ref{log}). As the
Koteck\'{y}-Preiss result concerns convergence of the pinned
connected function (\ref{K}), it appears to be inapplicable. However, our
next theorem shows that pinning is actually the central ingredient
that ties these two seemingly unrelated issues together.

\begin{theorem} [Main Theorem]
\label{main} Fix $M>1$. Consider the interaction $K$ with the
Banach space norm $||K||$ (cf. Theorem \ref{con}). Suppose
$\sum_{i=1}^k |\beta_i|$ is small (\ref{eps}). Then the limiting
free energy $\phi^{\beta}$ (\ref{phi}) is analytic in $\beta$ and
the rate of convergence is uniform.
\end{theorem}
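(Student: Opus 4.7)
The plan is to combine Proposition \ref{basic} and Theorem \ref{KP} with Vitali's theorem on normal families of analytic functions, transferring analyticity from finite $n$ to the infinite-volume limit. For each fixed $n$, the partition function (\ref{partition}) is a finite sum of exponentials of $\beta$-linear expressions, hence entire in $\beta\in\mathbb{C}^k$ and strictly positive on $\mathbb{R}^k$, so $f_n(\beta):=\log W/|E_n|$ is analytic on a complex neighborhood of any real $\beta$. The job is to show that the limit $\phi^\beta$ inherits analyticity on the small-parameter region.

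First I would complexify the parameter. Let $D\subset\mathbb{C}^k$ be the open polydisc on which $\sum_i|\beta_i|$ is strictly less than the bound in (\ref{eps}). Since $|w_N|\leq v_N$ and $v_N$ depends on $\beta$ only through $|K(X)|$, and since $|\exp(z)-1|\leq \exp(|z|)-1$ holds for every complex $z$, Proposition \ref{basic} ensures that the Kotecky-Preiss condition (\ref{ineq}) remains valid at every vertex pair $e$ for every $\beta\in D$. Theorem \ref{KP} with $B_N=M^{|N|}$ then yields absolute convergence of the cluster expansion (\ref{log}) throughout $D$.

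Next I would extract a uniform bound on $|f_n|$. Summing the pinned estimate (\ref{K}) over all pinning sets $N_0$ counts each cluster contributing to (\ref{log}) at least once, so $|\log W|\leq \sum_{N\subseteq E_n} v_N M^{|N|}$. The identity $\sum_{e\in N}1/|N|=1$ combined with (\ref{ineq}) yields
\begin{equation*}
\sum_{N\subseteq E_n} v_N M^{|N|}=\sum_{e\in E_n}\sum_{N\ni e}\frac{v_N M^{|N|}}{|N|}\leq \sum_{e\in E_n}\sum_{N\ni e}v_N M^{|N|}\leq |E_n|\log M,
\end{equation*}
so $|f_n(\beta)|\leq \log M$ uniformly in $n$ and in $\beta\in D$. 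The family $\{f_n\}$ is therefore normal on $D$, and Vitali-Porter reduces the analyticity of $\phi^\beta$ and the uniform convergence $f_n\to\phi^\beta$ on compact subsets to pointwise convergence on a set with an accumulation point.

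The hard part will be securing this pointwise convergence. I would exploit the permutation symmetry of $K_n$ together with the finite-body property $K(X)=0$ for $|X|>p$: every Taylor coefficient of $f_n$ around $\beta=0$ is a finite combinatorial sum over cluster shapes rooted at a single vertex pair, whose per-site count stabilizes as $n\to\infty$. This termwise stabilization, justified inside the domain of absolute convergence supplied by KP, identifies a formal power series in $\beta$ that is analytic on $D$ and that agrees with $f_n$ in the limit at every fixed $\beta$ where both are defined. Normality plus agreement on a neighborhood of $\beta=0$ then upgrades this to uniform convergence of $f_n$ to $\phi^\beta$ on every compact subset of $D$, completing the proof.
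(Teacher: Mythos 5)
Your core estimate coincides with the paper's proof: Proposition \ref{basic} verifies the Koteck\'{y}--Preiss condition (\ref{ineq}), Theorem \ref{KP} with $B_N=M^{|N|}$ gives the pinned bound (\ref{K}), and summing over pinning sites yields $|\log W|\leq |E_n|\log M$. The paper stops exactly there and simply declares $\phi^{\beta}$ convergent and bounded by $\log M$. Where you differ is in what you do with that uniform bound: you complexify $\beta$, note that each finite-$n$ free energy $\log W/|E_n|$ is analytic on the polydisc where (\ref{eps}) holds (the convergent expansion guarantees $W\neq 0$ there), and invoke Vitali--Porter to reduce analyticity of the limit to pointwise convergence on a set with an accumulation point. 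This is the standard Lebowitz--Penrose-style completion of such arguments, and it supplies something the paper's displayed proof does not: the paper's argument establishes only the uniform bound, not the existence of the limit nor its analyticity in $\beta$, so your scaffolding addresses a real deficiency rather than duplicating the paper.

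The caveat is the step you yourself flag as hard. The ``termwise stabilization'' of Taylor coefficients is not a soft consequence of permutation symmetry here, because the lattice $E_n$ is not a fixed graph being exhausted: the coordination number of a site grows with $n$, and the interaction $K(X)=n^2\sum_i\beta_i\, d(H_i,X)$ itself depends on $n$ through the homomorphism densities. Showing that the per-site contribution of each fixed cluster shape converges as $n\to\infty$ therefore requires an actual computation with the densities $d(H_i,X)$ (roughly, that $n^2 d(H_i,X)$ times the number of congruent rooted clusters per edge has a limit), which you only sketch. So your write-up is incomplete at precisely that point --- but it is the same point at which the paper's own proof is silent, so you have not missed anything the paper provides; you have isolated, and partially addressed, the gap the paper glosses over.
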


\begin{proof}
By Proposition \ref{basic}, (\ref{ineq}) holds for every vertex
pair $e$. The pinned connected function (\ref{K}) thus converges
absolutely by Theorem \ref{KP}, which further implies
\begin{eqnarray}
|\log W|&\leq& \sum_{N\subseteq E_n}
\sum_{n=1}^{\infty}\frac{1}{n!}\sum_{N_1,...,N_n: \exists i
N_i=N}\left|C\left(N_1,...,N_n\right)\right||w_{N_1}|\cdots
|w_{N_n}|\\
&\leq&\sum_{N\subseteq E_n} v_{N}M^{|N|}\leq\sum_{e\in
E_n}\sum_{N: e\in N} v_{N}M^{|N|} \leq |E_n|\log M.
\end{eqnarray}
We conclude that the
limiting free energy $\phi^{\beta}$ is absolutely convergent and bounded above by $\log M$.
\end{proof}

The rest of this section is devoted to the proof of Proposition
\ref{basic}. The weighted activity sum in the Koteck\'{y}-Preiss
weak interaction condition (\ref{ineq}) is rewritten as a power
series, whose terms are then shown to be exponentially small under
(\ref{eps}) by a series of lemmas.

\vskip.1truein

\noindent \textit{Proof of Proposition \ref{basic}.} We notice
that when $||K||$ is small (say $||K||\leq \frac{1}{2}$),
$\rme^{|K(X)|}-1\leq 2|K(X)|$ by the mean value theorem. For
$\cup\Gamma_c=N$, $|N|\leq \sum|X|$ with $X$ in $\Gamma_c$. We
have
\begin{eqnarray}
\sum_{N: e\in N}v_N M^{|N|}&\leq&\sum_{N: e\in N}\sum_{\cup\Gamma_c=N}M^{|N|}\prod_{X\in \Gamma_c}2|K(X)|\\
&\leq&\sum_{\Gamma_c: e\in \cup\Gamma_c}\prod_{X\in
\Gamma_c}2|K(X)|M^{|X|}.
\end{eqnarray}
We say that a hypergraph $\Gamma_c$ is rooted at the vertex pair
$e$ if $e\in \cup\Gamma_c$. Let $a_n(e)$ be the contribution of
all connected hypergraphs with $n$ links that are rooted at $e$,
\begin{equation}
\label{ann} a_n(e)=\sum_{e\in \cup\Gamma_c:
|\Gamma_c|=n}\prod_{X\in \Gamma_c}2|K(X)|M^{|X|}.
\end{equation}
Then
\begin{equation}
\sum_{N: e\in N}v_N M^{|N|}\leq \sum_{n=1}^{\infty}\sup_{e\in
E_n}a_n(e):=\sum_{n=1}^{\infty} a_n.
\end{equation}
It seems that once we show that $a_n$ is exponentially small, the
power series above will converge, and our claim might follow. To
estimate $a_n$, we relate to some standard combinatorial facts
\cite{Minlos}. \qed

\begin{lemma}
\label{lemma} Let $a_n$ be the supremum over $e$ of the
contribution of connected hypergraphs with $n$ links that are
rooted at $e$. Then $a_n$ satisfies the recursive bound
\begin{equation}
a_n\leq 2||K||M^p\sum_{k=0}^{p}{p \choose k}
\sum_{a_{n_1},...,a_{n_k}: n_1+...+n_k+1=n}a_{n_1}\cdots a_{n_k}
\end{equation}
for $n\geq 1$, where ${p \choose k}$ is the binomial coefficient.
\end{lemma}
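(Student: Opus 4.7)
The plan is to prove the recursion by a root-link decomposition of the connected hypergraph $\Gamma_c$. Given a connected hypergraph with $|\Gamma_c|=n$ rooted at $e$, I would first single out a distinguished link $X_0\in\Gamma_c$ with $e\in X_0$. Since $K$ is finite-body, $K(X_0)=0$ whenever $|X_0|>p$, so only $|X_0|\leq p$ contributes, and $M>1$ yields $M^{|X_0|}\leq M^p$. The root link carries weight $2|K(X_0)|M^{|X_0|}\leq 2|K(X_0)|M^p$ inside the product defining $a_n(e)$, and summing over all candidate root links containing $e$ produces the factor $\sum_{X_0\ni e}2|K(X_0)|\leq 2\,||K||$ directly from the Banach space norm introduced in Theorem \ref{con}.

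Next I would analyze the residual hypergraph $\Gamma_c\setminus\{X_0\}$. It splits into a disjoint union of connected sub-hypergraphs $\Gamma_c^{(1)},\dots,\Gamma_c^{(k)}$, and connectedness of $\Gamma_c$ forces each $\Gamma_c^{(j)}$ to contain some link sharing a vertex pair with $X_0$. The key combinatorial observation is that distinct sub-hypergraphs must attach to $X_0$ at \emph{disjoint} vertex pairs: if links $X^{(1)}\in\Gamma_c^{(1)}$ and $X^{(2)}\in\Gamma_c^{(2)}$ both contained the same $e'\in X_0$, they would overlap and hence lie in the same connected component, contradicting the splitting. Consequently, the decomposition is encoded by choosing a subset $S\subseteq X_0$ of size $k$ (the attachment points) together with, for each $e_j\in S$, a connected sub-hypergraph rooted at $e_j$ of some size $n_j\geq 1$, subject to $n_1+\cdots+n_k+1=n$.

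Assembling the pieces, the contribution of each $\Gamma_c^{(j)}$ is bounded by $a_{n_j}(e_j)\leq a_{n_j}$, the number of size-$k$ subsets $S\subseteq X_0$ is $\binom{|X_0|}{k}\leq\binom{p}{k}$, and $k$ ranges from $0$ to $p$. This produces
\begin{equation*}
a_n(e)\leq 2\,||K||\,M^p\sum_{k=0}^{p}\binom{p}{k}\sum_{n_1+\cdots+n_k+1=n}a_{n_1}\cdots a_{n_k},
\end{equation*}
and taking the supremum over $e\in E_n$ delivers the claimed recursive bound on $a_n$.

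The main obstacle is the combinatorial bookkeeping underlying the root-link decomposition. Enumerating (root link, sub-hypergraph assignment) pairs overcounts each $\Gamma_c$ by the number of links of $\Gamma_c$ that contain $e$, but since every factor in sight is nonnegative, this overcounting only loosens the upper bound and does no harm. The genuinely non-routine ingredient is the disjointness of attachment sites, which follows from the definition of connected hypergraph components via overlapping links and is precisely what permits the clean binomial factor $\binom{p}{k}$ in place of a much looser $p^k$ on the attachment configurations — a tightness that will matter for driving the convergence estimates in the remainder of the proof of Proposition \ref{basic}.
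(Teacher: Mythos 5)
Your proof is correct and follows essentially the same route as the paper: extract a root link containing $e$, use the finite-body property to bound its size by $p$ and its weight by $2||K||M^p$, split the residue into connected components attached to the root link at necessarily distinct vertex pairs, and count the attachment configurations by $\binom{p}{k}$. The only cosmetic difference is that the paper chooses the \emph{least} link containing $e$ to make the decomposition injective, whereas you allow an arbitrary distinguished root link and absorb the resulting overcounting into the inequality, which is equally valid for an upper bound.
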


\begin{proof}
We linearly order the vertex pairs $e$ in $E_n$ and also linearly
order the subsets $X$ of $E_n$. For a fixed but arbitrarily
chosen $e$ in $E_n$, we examine (\ref{ann}). Write
$\Gamma_c=\{X_1\}\cup \Gamma^1_c$, where $X_1$ is the least $X$ in
$\Gamma_c$ with $e \in X_1$. As $K(X_1) \neq 0$ only for
$|X_1|\leq p$,
\begin{equation}
a_n(e)\leq 2||K||M^p \sum_{\Gamma^1_c}\prod_{X\in
\Gamma^1_c}2|K(X)|M^{|X|}.
\end{equation}
The remaining hypergraph $\Gamma^1_c$ has $n-1$ subsets and breaks
into $k: k\leq p$ connected components (which again follows from
the finite-body property of $K$). Say they are
$\Gamma_1,...,\Gamma_k$ of sizes $n_1,...,n_k$, with
$n_1+...+n_k=n-1$. For each component
$\Gamma_i$, there is a least vertex pair $e_i$ through which it is
connected to $X_1$, i.e., $e_i \in X_1$ is a root of $\Gamma_i$. As
$|X_1|\leq p$ and $\Gamma^1_c$ consists of $k$ components,
the number of possible choices for the root locations is at most ${p \choose k}$. We have
\begin{equation}
a_n(e)\leq 2||K||M^p \sum_{k=0}^{p} {p \choose k}
\sum_{a_{n_1},...,a_{n_k}: n_1+...+n_k+1=n}a_{n_1}\cdots a_{n_k}.
\end{equation}
Our inductive claim follows by taking the supremum over all $e$ in
$E_n$. Finally, we look at the base step: $n=1$. In this simple
case, as reasoned above, we have
\begin{equation}
\label{base}
a_1=\sup_{e\in E_n}\sum_{e\in \cup\Gamma_c:
|\Gamma_c|=1}\prod_{X\in \Gamma_c}2|K(X)|M^{|X|}\leq2||K||M^{p},
\end{equation}
and this verifies our claim.
\end{proof}

Clearly, $\sum_{N: e\in N}v_N M^{|N|}$ will be bounded above by
$\sum_{n=1}^{\infty}\bar{a}_n$, if
\begin{equation}
\label{a} \bar{a}_n= 2||K||M^p\sum_{k=0}^p {p \choose
k}\sum_{\bar{a}_{n_1},...,\bar{a}_{n_k}:
n_1+...+n_k+1=n}\bar{a}_{n_1}\cdots \bar{a}_{n_k}
\end{equation}
for $n\geq 1$, i.e., equality is obtained in the above lemma.
Observe that when $n=1$, the empty product (corresponding to $k=0$) is the only nonzero term on the right side of (\ref{a}),
so
we have 
$\bar{a}_1=2||K||M^p$, matching the bound in (\ref{base}).

\begin{lemma}
\label{lemma2} Consider the coefficients $\bar{a}_n$ that bound
the contributions of connected and rooted hypergraphs with $n$
links. Let $w=\sum_{n=1}^{\infty}\bar{a}_n z^n$ be the generating
function of these coefficients. The recursion relation (\ref{a})
for the coefficients is equivalent to the formal power series
generating function identity
\begin{equation}
\label{id} w=2||K||M^pz(1+w)^p.
\end{equation}
\end{lemma}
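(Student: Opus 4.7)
The plan is to multiply the recursion (\ref{a}) by $z^n$, sum over $n \geq 1$, and recognize the resulting convolutions via the Cauchy product formula. First I would record the bookkeeping: the constraint $n_1+\cdots+n_k+1 = n$ is the same as $n_1+\cdots+n_k = n-1$ with each $n_i \geq 1$ (since $\bar{a}_0$ is not defined), and the $k=0$ term, via the empty-product convention, contributes exactly once, namely when $n=1$, reproducing the base case $\bar{a}_1 = 2||K||M^p$ already noted in the paragraph immediately preceding the lemma.

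Next I would multiply both sides by $z^n$, factor out one $z$ to pair with $z^{n-1}$, and reindex $m := n-1$, obtaining
\[
w(z) = 2||K||M^p\, z \sum_{k=0}^{p} \binom{p}{k} \sum_{m=0}^{\infty} z^m \sum_{n_1+\cdots+n_k=m,\, n_i \geq 1} \bar{a}_{n_1}\cdots \bar{a}_{n_k}.
\]
The key observation is that for each fixed $k \geq 1$, the inner double sum in $m$ is precisely the $k$-fold Cauchy product of $w(z)$ with itself and therefore equals $w^k$; and for $k=0$ only $m=0$ survives, contributing $1 = w^0$. Applying the binomial theorem $\sum_{k=0}^{p} \binom{p}{k} w^k = (1+w)^p$ then yields (\ref{id}).

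Conversely, the functional equation (\ref{id}) determines the coefficients of $w$ uniquely as a formal power series, since one can extract the coefficient of $z^n$ on both sides inductively, and that extraction reproduces the recursion (\ref{a}) verbatim. Hence the two statements are genuinely equivalent as asserted. I do not anticipate any real obstacle: the whole argument is a routine manipulation of formal power series, and the only delicate bookkeeping is the $k=0$ / $n=1$ base case, which is handled precisely by the empty-product convention and is already consistent with the recursion as stated.
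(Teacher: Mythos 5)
Your proposal is correct and follows essentially the same route as the paper: expand $(1+w)^p$ by the binomial theorem and match coefficients of $z^n$, with the inner sums recognized as $k$-fold convolutions (Cauchy products) of $w$. Your write-up is somewhat more explicit about the reindexing $m=n-1$, the $k=0$ empty-product base case, and the converse direction of the equivalence, but there is no substantive difference in method.
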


\begin{proof}
Notice that $(1+w)^p=\sum_{k=0}^p {p \choose k} w^k$, thus
\begin{eqnarray}
w=2||K||M^pz\sum_{k=0}^p {p \choose k} w^k.
\end{eqnarray}
Writing completely in terms of $z$, we have
\begin{equation}
\label{equ}
\sum_{n=1}^{\infty}\bar{a}_n z^n=2||K||M^p\sum_{k=0}^p {p \choose
k} \sum_{\bar{a}_{n_1},...,\bar{a}_{n_k}:
n_1+...+n_k+1=n}\bar{a}_{n_1}\cdots \bar{a}_{n_k}z^n.
\end{equation}
We compare the coefficient of $z$ on both sides of (\ref{equ}): On the left it is given by $\bar{a}_1$, and 
on the right it is given by $2||K||M^p$ times the empty product, thus
$\bar{a}_1=2||K||M^p$.
Our general claim follows from term-by-term comparison.
\end{proof}

\begin{lemma}
\label{lemma3} If $w$ is given as a function of $z$ as a formal
power series by the generating function identity (\ref{id}), then
this power series has a nonzero radius of convergence $|z|\leq
\frac{(p-1)^{p-1}}{2||K||(Mp)^p}$.
\end{lemma}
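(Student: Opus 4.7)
The plan is to solve the implicit equation for $w$ explicitly via Lagrange inversion, then extract the radius of convergence by comparing consecutive coefficients. Set $c = 2\|K\|M^p$, so the identity (\ref{id}) becomes $w = cz(1+w)^p$, i.e., $w = z\,\phi(w)$ with $\phi(w) = c(1+w)^p$, which is analytic at $0$ with $\phi(0) = c \neq 0$.

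First I would invoke the Lagrange inversion formula to obtain the closed form
\begin{equation*}
\bar{a}_n = \frac{1}{n}\,[w^{n-1}]\,\phi(w)^n = \frac{c^n}{n}\,[w^{n-1}](1+w)^{np} = \frac{c^n}{n}\binom{np}{n-1}.
\end{equation*}
A quick sanity check at $n=1$ gives $\bar{a}_1 = c = 2\|K\|M^p$, matching the base case already recorded after equation (\ref{a}). The radius of convergence is then $R = 1/\limsup_n \bar{a}_n^{1/n}$, and it suffices to apply the ratio test.

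Next I would compute $\bar{a}_n/\bar{a}_{n+1}$ by writing it as $\frac{1}{c}\cdot\frac{n+1}{n}\cdot\binom{np}{n-1}/\binom{(n+1)p}{n}$ and simplifying the factorial ratio to
\begin{equation*}
\frac{\binom{np}{n-1}}{\binom{(n+1)p}{n}} = \frac{n\,\prod_{i=2}^{p}\bigl(n(p-1)+i\bigr)}{\prod_{i=1}^{p}(np+i)}.
\end{equation*}
As $n\to\infty$ the numerator behaves like $n\cdot(n(p-1))^{p-1}$ and the denominator like $(np)^p$, so the ratio tends to $(p-1)^{p-1}/p^p$. Therefore
\begin{equation*}
R = \lim_{n\to\infty}\frac{\bar{a}_n}{\bar{a}_{n+1}} = \frac{(p-1)^{p-1}}{c\,p^p} = \frac{(p-1)^{p-1}}{2\|K\|(Mp)^p},
\end{equation*}
which is the claimed radius. (Equivalently, one can run singularity analysis on $F(w,z) = w - cz(1+w)^p$: the nearest branch point occurs where $F = \partial_w F = 0$, giving $w = 1/(p-1)$ and exactly the same critical $z$, providing an independent verification.)

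The only potentially delicate step is the asymptotic evaluation of the binomial ratio; this is really just bookkeeping with Stirling's formula or the product form above, so no genuine obstacle arises. The conceptual content lies entirely in recognizing that $w = z\phi(w)$ has the right shape for Lagrange inversion, which turns the implicitly defined power series into an explicit one and reduces convergence to a routine asymptotic calculation.
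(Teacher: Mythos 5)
Your proof is correct, and it takes a genuinely different route from the paper's. The paper does not compute the coefficients at all: it simply inverts the identity to $z_1 = w/(1+w)^p$ with $z_1 = 2\|K\|M^p z$ and observes that this map carries $w \in [0, 1/(p-1)]$ bijectively onto $z_1 \in [0, (p-1)^{p-1}/p^p]$, the endpoint $w = 1/(p-1)$ being exactly the branch point where $\partial z_1/\partial w$ vanishes --- i.e., the argument you relegate to a parenthetical ``independent verification'' is the paper's entire proof. Your main line instead extracts the exact coefficients $\bar{a}_n = \frac{c^n}{n}\binom{np}{n-1}$ by Lagrange inversion and runs the ratio test; the computation of the binomial ratio and its limit $(p-1)^{p-1}/p^p$ checks out, and the $n=1$ sanity check against $\bar{a}_1 = 2\|K\|M^p$ is consistent with the paper. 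Your approach is more explicit and arguably more self-contained (the paper's one-line inversion argument tacitly relies on positivity of the coefficients and the analytic inverse function theorem to conclude convergence up to the branch point), and the closed form for $\bar{a}_n$ would in fact let you sharpen the bound $\bar{a}_n \leq \left(2\|K\|(Mp)^p\right)^n (p-1)^{-(1+(p-1)n)}$ that the paper extracts from this lemma in the continuation of the proof of Proposition \ref{basic}. The paper's approach is shorter and has the side benefit of directly producing the critical value $w = 1/(p-1)$, which is what that continuation actually uses. One minor caveat common to both arguments: the case $p = 1$ is degenerate ($w = 1/(p-1)$ is meaningless there), though your formula $\bar{a}_n = c^n$ still gives the correct radius $1/c$ in that case.
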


\begin{proof}
Without loss of generality, assume $z\geq 0$. Set
$z_1=2||K||M^pz$. Solving (\ref{id}) for $z_1$ gives
$z_1=w/(1+w)^p$. As $z_1$ goes from $0$ to $(p-1)^{p-1}/{p^p}$,
the $w$ values range from $0$ to $1/(p-1)$.
\end{proof}

\noindent \textit{Proof of Proposition \ref{basic} continued.} We
notice that in the above lemma,
$w=\sum_{n=1}^{\infty}\bar{a}_nz^n=1/(p-1)$ corresponds to
$2||K||M^pz=(p-1)^{p-1}/{p^p}$, which implies that for each $n$,
\begin{equation}
\bar{a}_n \leq
\left(2||K||(Mp)^p\right)^n\left(p-1\right)^{-\left(1+(p-1)n\right)}.
\end{equation}
Gathering all the information we have obtained so far,
\begin{eqnarray}
\sum_{N: e\in N}v_N M^{|N|}&\leq&
\sum_{n=1}^{\infty}\left(2||K||(Mp)^p\right)^n\left(p-1\right)^{-\left(1+(p-1)n\right)}
\\&=&\frac{\frac{2||K||(Mp)^p}{(p-1)^p}}{1-\frac{2||K||(Mp)^p}{(p-1)^{p-1}}}\leq
\log M.
\end{eqnarray}
by (\ref{eps}). \qed

\section{Concluding remarks}
\label{conclusion} This paper reveals a deep connection between
random graphs and lattice gas (Ising) systems, making the
exponential random graph model treatable by cluster expansion
techniques from statistical mechanics. We show that any
exponential random graph model may alternatively be viewed as a
lattice gas model with a finite Banach space norm and derive a
convergent power series expansion (high-temperature expansion) for
the limiting free energy in the case of small parameters. Since
the free energy is the generating function for the expectations of
other random variables, this characterizes the structure and
behavior of the limiting network in this parameter region. We hope
this rigorous expansion will provide insight into the limiting
structure of exponential random graphs in other parameter regions
and shed light on the application of renormalization group ideas
to these models.

\ack The author is grateful to Persi Diaconis, Sourav Chatterjee,
and Charles Radin for introducing her to the exciting subject of
random graphs and for their many enlightening and encouraging
comments. She also thanks her PhD advisor Bill Faris for his
continued help and support. The author appreciated the opportunity
to talk about this work in the 2011 workshop in Dynamical
Gibbs-non-Gibbs Transitions at EURANDOM, organized by Aernout van
Enter, Roberto Fernandez, Frank den Hollander, and Frank Redig.

\section*{References}

\end{document}